\def\mdseries@tt{m}
\newcommand{\sort}[1]{\ensuremath{[#1]}}
\newcommand{\extract}[2]{\ensuremath{\texttt{extract}^{#1}_{#2}}}
\newcommand{\eval}[1]{\ensuremath{\llbracket #1 \rrbracket}}
\newcommand{\bool}{\ensuremath{\mathit{Boolean}}\xspace}
\newcommand{\true}{\top\xspace}
\newcommand{\free}[1]{\ensuremath{\mathit{free}(#1)}}
\newcommand{\defeq}{\ensuremath{\stackrel{\mathrm{df}}{=}}}
\newcommand{\NP}{\mathsf{NP}}
\newcommand{\PSPACE}{\mathsf{PSPACE}}
\newcommand{\SMT}{\textsc{smt}\xspace}
\newcommand{\SAT}{\textsc{sat}\xspace}
\newcommand{\CEDS}{\textsc{ceds}\xspace}
\newcommand{\pVarsi}[2]{\ensuremath{\mathit{vars}}(#1, #2)}
\newcommand{\pVars}[1]{\ensuremath{\mathit{vars}}(#1)}
\newcommand{\progVars}{\ensuremath{\mathit{pVars}}\xspace}
\newcommand{\prog}{\ensuremath{\mathit{prog}}\xspace}
\newcommand{\gen}{\ensuremath{\mathit{gen}}\xspace}
\newcommand{\notsubseteq}{\ensuremath{\mathit{notsubseteq}}}
\newcolumntype{L}[1]{>{\raggedright\let\newline\\\arraybackslash\hspace{0pt}}m{#1}}
\newcolumntype{C}[1]{>{\centering\let\newline\\\arraybackslash\hspace{0pt}}m{#1}}
\newcolumntype{R}[1]{>{\raggedleft\let\newline\\\arraybackslash\hspace{0pt}}m{#1}}
\begin{document}
\title[SMT Queries Decomposition in Semi-Symbolic Model
Checking]{SMT Queries Decomposition and Caching \\ in Semi-Symbolic
  Model Checking}




\author{Jan Mrázek}
\affiliation{\institution{\fimuni}}
\email{xmrazek7@fi.muni.cz}

\author{Martin Jonáš}
\affiliation{\institution{\fimuni}}
\email{xjonas@fi.muni.cz}

\author{Jiří Barnat}
\affiliation{\institution{\fimuni}}
\email{xbarnat@fi.muni.cz}

\begin{abstract}

In semi-symbolic (control-explicit data-symbolic) model checking the state-space
explosion problem is fought by representing sets of states by first-order
formulas over the bit-vector theory. In this model checking approach, most of
the verification time is spent in an \SMT solver on deciding satisfiability of
quantified queries, which represent equality of symbolic states. In this paper,
we introduce a new scheme for decomposition of symbolic states, which can be
used to significantly improve the performance of any semi-symbolic model
checker. Using the decomposition, a model checker can issue much simpler and
smaller queries to the solver when compared to the original case. Some \SMT calls
may be even avoided completely, as the satisfaction of some of the simplified
formulas can be decided syntactically. Moreover, the decomposition allows for an
efficient caching scheme for quantified formulas. To support our theoretical
contribution, we show the performance gain of our model checker \symdivine
on a set of examples from the Software Verification Competition.

\end{abstract}

\keywords{Model checking, semi-symbolic model checking, state slicing,
  caching, formal verification, \SMT query decomposition, \symdivine.}

\acmDOI{N/A}

\acmISBN{N/A}

\acmConference{Submitted to arxiv.org}{}{}{}
\acmYear{2017}
\setcopyright{none}
\acmPrice{}

\maketitle

\section{Introduction} \label{sec:intro}

Automated formal verification of a real-world code is an ultimate goal
for both academia and industry. One of the methods that are most
suitable for achieving this goal is model checking. Originally, the
model checking approach was designed for verification of distributed
systems that were modeled in some appropriate modeling language for
the purpose of verification. However, recent achievements made model
checking tools more general and applicable directly to source codes of
middle-sized software projects. For example, model checker
\divine~\cite{DiVinE30} allows for a direct model checking of an
unmodified C or C++ code. The general applicability and widespread
use of model checking is unfortunately limited by the well-known state
space explosion problem, i.e. data structures that need to be produced
and explored in order to complete the verification process may blow-up
exponentially with respect to the size of the model-checked source
code.

The exponential growth of data structures comes from two sources --
the interleaving of parallel processes in the system being verified
(control-flow non-determinism) and from processing of input data (data
non-determinism). The control flow non-determinism can be alleviated
by state space reduction methods, e.g. $\tau$-reduction~\cite{RBB13},
or partial order reduction~\cite{Peled1998}. Data non-determinism, on
the other hand, is typically dealt with using abstract
interpretation~\cite{AbsInt}, BDD data structures~\cite{CAV02}, or
using formulas in a suitable logic and \SAT or \SMT
solvers~\cite{NuXmv}. Surprisingly, most model checking tools
available to general public focus mainly on a single type of
non-determinism. To address this issue, a semi-symbolic approach to
model checking has been introduced recently. This approach is also
called \emph{Control-Explicit Data-Symbolic} (\CEDS) model
checking~\cite{BBH14}. \CEDS approach basically follows the
enumerative model checking scheme with the exception that the data
parts of states are represented symbolically, which allows a \CEDS
model checker to represent multiple states with the same control-flow
part as a single compactly represented object -- so-called
\emph{multi-state}. This efficiently mitigates data non-determinism
blow-up during the state space exploration, but introduces costly
operations for working with symbolic parts of states. The \CEDS
approach is implemented, e.g. within the tool
\symdivine~\cite{spin2016}.

\symdivine employs first-order formulas and an \SMT solver to deal
with multiple possible values of symbolic data at one control-flow
location. As a result, most of the verification time of \symdivine
verification process is spent in queries to the \SMT solver. In this
paper, we introduce a new scheme for slicing of states in \CEDS
approach, which reduces complexity of issued \SMT queries and allows
for their caching. Caching of \SMT queries has a significant impact on
the performance of the whole model checking procedure, and it is not
easy to achieve without the state slicing as the queries issued during
verification contain universal quantifiers. Furthermore, the state
slicing allows for further optimizations in the model checking
procedure: in some cases, the equality of symbolic parts of states may
be solved purely syntactically without even calling an \SMT solver,
which brings another performance boost. In the paper, we give the
necessary theoretical background for state slicing, as well as a
report on our implementation in the tool \symdivine and its
experimental evaluation on benchmarks from Software Verification
Competition (SV-COMP)~\cite{SVCOMP}.

The rest of the paper is organized as follows. In
Section~\ref{sec:preliminaries}, we give the necessary introduction to
the \CEDS model checking and context for the state slicing and
caching, which are described in Section~\ref{sec:slicing}. In
Section~\ref{sec:impleval}, we describe details of our implementation
and report on experimental evaluation we performed to measure the
benefit of our new approach. Finally, we conclude the paper in
Section~\ref{sec:conclusion}.

\section{Preliminaries} \label{sec:preliminaries}

The performance of a semi-symbolic model checker relies on a compact
representation of the sets of states and on efficient implementations
of operations using this representation. Although several
representations have been proposed and tested, the first-order
formulas over the theory of fixed size bit-vectors have shown to be
most efficient in practice. Using this representation, tests for
emptiness of a symbolic state and for equivalence of two symbolic
states are performed by queries to an \SMT solver capable of handling
quantified bit-vector formulas.

\subsection{Theory of Fixed Sized Bit-vectors}

In this section, we briefly recall the bit-vector theory, which is
used to represent sets of valuations in the latest versions of the
tool \symdivine.

The \emph{theory of fixed sized bit-vectors} is a many-sorted
first-order theory with infinitely many sorts $\sort{n}$ corresponding
to bit-vectors of length $n$. Additionally, as in
Hadarean~\cite{Had15}, we suppose a distinguished sort $\bool$ and
instead of treating formulas and terms differently, we consider
formulas as merely the terms of sort $\bool$. The only predicate
symbols in the BV theory are $=$, $\leq_u$, and $\leq_s$, representing
equality, unsigned inequality of binary-encoded natural numbers, and
signed inequality of integers in $2$'s complement representation,
respectively. Functions symbols in the theory are
$+, \times, \div, \&, \mid, \oplus, \ll, \gg, \cdot, \extract{n}{p}$,
representing addition, multiplication, unsigned division, bit-wise
and, bit-wise or, bit-wise exclusive or, left-shift, right-shift,
concatenation, and extraction of $n$ bits starting from position $p$,
respectively. For the map $\mu$ assigning to each variable a value in
a domain of its sort, we denote as $\eval{\_}_\mu$ the evaluation
function, which to each formula $\varphi$ assigns the value
$\eval{\varphi}_\mu$. This value is obtained by substituting free
variables in $\varphi$ by values given by $\mu$ and evaluating all
functions, predicates, and quantifiers according to their standard
interpretation. The formula $\varphi$ is \emph{satisfiable} if
$\eval{\varphi}_\mu = \true$ for some mapping $\mu$; it is
\emph{unsatisfiable} otherwise. Formulas $\varphi$ and $\psi$ with the
same set of free variables are \emph{equivalent} if
$\eval{\varphi}_\mu = \eval{\psi}_\mu$ for all assignments $\mu$.
Further, formulas $\varphi$ and $\psi$ are \emph{equisatisfiable} if
both are satisfiable or both are unsatisfiable. If $\Phi$ is a finite
set of formulas, we denote as $\bigwedge \Phi$ the conjunction of all
formulas in $\Phi$. A set of free variables of the formula $\varphi$
is defined as usual and denoted $\free{\varphi}$. Formulas $\varphi$
and $\psi$ are called \emph{(syntactically) dependent} if they do not
share any free variable, i.e.
$\free{\varphi} \cap \free{\psi} = \emptyset$. The precise description
of the many-sorted logic can be found for example in Barrett et
al.~\cite{BSST09}. For a precise description of the syntax and
semantics of the bit-vector theory, we refer the reader to
Hadarean~\cite{Had15}.

\subsection{\symdivine} \label{sec:symdivine}

\symdivine is a semi-symbolic model checker aiming for verification of
real world C and C++ programs featuring parallelism. To achieve
precise semantics of the input languages and, at the same time, to
ease the parsing, it is built upon the \llvm compiler framework. In
order to verify real-world pieces of code, \symdivine provides
intrinsic implementations of a subset of the pthread library to allow
parallelism and also of a subset of SV-COMP interface to allow users
to model non-deterministic inputs in their programs. Internally,
\symdivine relies on the \CEDS approach, in detail described in
\cite{BBH14}. The \CEDS approach allows \symdivine to verify both
safety and \ltl properties of programs under inspection. In the
following subsection, we cover the basics and details relevant to this
paper. For further information, we kindly refer the reader to the
original paper.

\subsection{Control-Explicit Data-Symbolic Model Checking}\label{subsec:ceds}

In the standard explicit state model checking, the state space graph
of a program is explored by an exhaustive enumeration of its states,
until an error is found or all reachable states have been enumerated.
\symdivine basically follows the same idea, but instead of enumerating
states for all possible input values, it employs a \CEDS approach in
which the inputs of the program are treated in a symbolic
manner. While a purely explicit-state model checker has to produce a
new state for each and every possible input value, in \symdivine a set
of states that differ only in data values may be represented with a
single data structure, the so-called \emph {multi-state}.

A multi-state consists of an explicit control location and a set of
program's memory valuations. In practice, a set of memory valuations is
usually not listed explicitly, but in a more succinct
representation. By providing procedures for deciding whether the set of
memory valuations is empty and whether two sets of memory valuations
are equal, we can easily mimic most of explicit-state model checking
algorithms~\cite{BBH14} -- from a simple reachability of error states
to the full \ltl model checking. By operating on multi-states,
\symdivine can achieve up to exponential time and memory savings,
compared to purely explicit approaches.

Although the \CEDS approach is independent of the multi-state
representation, the choice of the representation can have an enormous
effect on the verification performance. Most recent versions of
\symdivine use a first-order formula over the theory of fixed size
bit-vectors to represent a multi-state. In this representation, the
set of represented memory valuations is precisely the set of satisfying
assignments to the given formula.

\subsection{Multi-state Representation}

Formally, a multi-state in \CEDS approach is a tuple $(c, m, \varphi)$,
where $c$ is an explicit control part, $m$ is an explicit memory shape
and $\varphi$ is a quantifier-free first-order formula over the theory of
fixed size bit-vectors.

A \emph{control part} $c$ is a tuple of call stacks, which contains
contents of stack for each active thread of a program under
inspection. Note that not all multi-states in a state space have to
contain the same number of call stacks, as threads can be spawned and
killed during the program execution. For each thread, a call stack is
composed of frames corresponding to function calls, which hold a
program counter and a reference to the segment in the
memory shape $m$ that corresponds to the active function call.

A \emph{memory shape} is a collection of segments, where each segment
has a unique identifier and contains a list of corresponding
variables. Each variable in a segment has associated several pieces of
information:
\begin{enumerate}
\item a type determining whether the variable is a pointer or a value,
  and if it is a value, also its bit-width,
\item explicit/symbolic mark, and
\item a value in case of variables marked as explicit.
\end{enumerate}
Given this setup, each variable can be uniquely identified by a pair
$(s,p)$, where $s$ is a segment identifier and $p$ is the position of
the variable inside that segment.

Note that \symdivine distinguishes between the control flow and a
memory shape, because \llvm instructions like \texttt{alloca} may
allocate memory that can escape from the function's segment. This
happens for example when a function obtains an argument that is a
value that was obtained by \texttt{alloca}. Therefore a single
multi-state can contain more segments than frames of call stacks.

Finally, each multi-state contains a quantifier-free formula $\varphi$
that represents possible values of all variables marked as symbolic in
the memory shape. Because a single program variable can be assigned to
multiple times, the formula $\varphi$ may for each program variable
$(s,p)$ contain multiple variables of form $(s, p)^{\mathit{gen}}$, where
$\mathit{gen}\in\mathbb{N}$. Variable $(s, p)^{\mathit{gen}}$ represents a value of the
program variable $(s,p)$ just after $\mathit{gen}$-th assignment to that
variable. The number $\mathit{gen}$ is called the \emph{generation} of
the variable $(s, p)^{\mathit{gen}}$. Let $\prog(s,p)$ denote the variable
$(s,p)^{\mathit{lgen}}$, where $lgen$ is the greatest generation of all
variables $(s,p)^i$ in $\varphi$. The variable $\prog(s,p)$
intuitively represents a real value of the program variable $(s,p)$ in
the multi-state. Therefore, given a model $\mu$ of a formula
$\varphi$, we can obtain a possible valuation of program variables in
a multi-state by restricting $\mu$ only to variables of form
$\prog(s,p)$. Thus a single satisfiability query to an \SMT
solver is sufficient to determine whether the set of states represented by
a multi-state is empty or not. This query is called the
\emph{emptiness check} for a multi-state.

In the further text, we refer to the control part $c$ and the memory
shape $m$ in a multi-state $s$ as the \emph{explicit part of
  $s$}. Similarly, we refer to $\varphi$ as the \emph{symbolic part of
  $s$}. Furthermore, if the segment identifier, position of the
variable in the segment and the generation of the variable are not
important, we refer to variables of formula $\varphi$ only as
$x,y,z,a,b, \ldots$. For the convenience, we suppose that each program
variable defined in the program location $c$ has at least one
corresponding variable in the formula $\varphi$. This assumption is
without the loss of generality, as for each program variable $(s,p)$,
a vacuous equality $(s,p)^1 = (s,p)^1$ can be conjoined to the formula
$\varphi$.

During the interpretation of the program, the verifier has to be able
to compute all successors of a given multi-state in order to construct the
complete state-space graph. Successors of a node can arise by two
types of operations:
\begin{itemize}
\item transformation caused by arithmetic, bitwise and memory instructions and
\item pruning caused by control-flow branching and by atomic propositions.
\end{itemize}
Both of these operations can be modeled by changing the formula
$\varphi$. For a given formula $\varphi$ and a program variable
$(s,p)$, denote as $\gen(s,p)$ the number $i$ such that
$\prog(s,p) = (s,p)^i$; this number is called the \emph{last
  generation of $(s,p)$ in $\varphi$}. Suppose we want to compute a
successor of a multi-state with a symbolic part $\varphi$. Then the
state resulting from a program instruction
$(s,p) = (s_2,p_2) \oplus (s_3,p_3)$, where $\oplus$ is a binary
arithmetic, bitwise or memory instruction, has a symbolic part
\[
  \varphi' \defeq \varphi~\wedge~((s,p)^{\gen(s,p) + 1} = \prog(s_2,p_2) \mathbin{\hat{\oplus}} \prog(s_3,p_3)),
\]
where $\hat{\oplus}$ is the corresponding function symbol in the
bit-vector theory.  Similarly, pruning a multi-state by a binary
predicate $(s_1,p_1) \boxtimes (s_2,p_2)$ results in a multi-state
with the symbolic part
\[
  \varphi' \defeq \varphi~\wedge~ (\prog(s_1,p_1) \mathbin{\hat{\boxtimes}} \prog(s_2,p_2)),
\]
where $\hat{\boxtimes}$ is again the corresponding predicate symbol in
the bit-vector theory.

\begin{example}
  Consider the following single-threaded C program.
  \begin{minted}[linenos]{c}
int main() {
    int x = nondet()
    int y = x + 5
    int x = x + 10
    if (x > y)
        y = y + 1
}
\end{minted}
  Control parts of all states are straightforward, since they contain
  only one stack with the associated program counter and a single
  memory segment. We describe the memory shape and the symbolic part
  of the multi-state $s$ that represents the state of the program on
  the end of the line 6. The single memory segment is labeled by 1
  and contains two variables: \texttt{x} labeled by the index 1 and
  \texttt{y} labeled by the index 2. Both these variables are marked
  as symbolic values. Therefore, the symbolic part contains variables
  $(1,1)^i$, which represent values of the program variable $x$, and
  variables $(1,2)^i$, which represent values of the program variable
  $y$. For the sake of readability, we will refer to variables
  $(1,1)^i$ as $x^i$ and to variables $(1,2)^i$ as $y^i$. In
  particular, the symbolic part $\varphi$ of this multi-state is
\[
  (y^1 = x^1 + 5)~\wedge~(x^2 = x^1 + 10)~\wedge~(x^2 \leq_s y^1)~\wedge~(y^2 = y^1 + 1).
\]
\end{example}

\subsection{Multi-state Equality Check}
We now describe how an \SMT solver can be used to decide whether two
multi-states represent the same set of concrete states. We further
refer to this check as to the \emph{equivalence check}.

Let $s_1$ and $s_2$ be multi-states with the same explicit part. That
is, $s_1 = (c, m, \varphi)$ and $s_2 = (c, m, \psi)$ for a control
part $c$, memory shape $m$ and formulas $\varphi$ and $\psi$. Let
$\free{\varphi} = \{x_1, \ldots, x_n \}$ and
$\free{\psi} = \{y_1, \ldots, y_m \}$. Furthermore, let us denote the
set of program variables defined at the control location $c$ as
$\pVars{c}$. For each program variable $p$, there is a variable $x^p$
in $\varphi$ that represents the last generation of the program
variable $p$ in the multi-state $s_1$.  Analogously, the last
generation of the program variable $p$ in $s_2$ is represented by a
variable $y^p$ in $\psi$.

We want to decide whether the sets of states represented by $s_1$ and $s_2$
are equal. To determine this, we define a formula
$\notsubseteq(s_1, s_2)$, which is satisfiable precisely if there is a
state represented by $s_1$ that is not represented by $s_2$:
\[
  \notsubseteq(s_1, s_2) \defeq \varphi~\land~ \forall y_1 \ldots y_m
  \, \Big ( \psi \Rightarrow \bigvee_{p \in \pVars{c}} (x^p \not =
  y^p) \Big)
\]
The equality of two multi-states can now be determined by using an \SMT
solver: the states $s_1$ and $s_2$ are equal precisely if both of
formulas $\notsubseteq(s_1, s_2)$ and $\notsubseteq(s_2, s_1)$ are
unsatisfiable, i.e. there is no memory valuation that is represented
only by one of the multi-states. However, the equality check requires
a quantified \SMT query, which is usually more expensive than the
quantifier-free one. For example, in the theory of fixed-size
bit-vectors, deciding satisfiability of a quantifier-free formula is
$\NP$-complete, whereas deciding satisfiability of a formula with
quantifiers is $\PSPACE$-complete~\cite{KFB16}.

\section{State Slicing and Caching} \label{sec:slicing}

In order to reduce the cost of a quantified \SMT query, we observe that
real-world programs often contain lots of independent variables, i.e. pairs of
variables such that a change in any of them does not affect the other. We give
two simple examples that illustrate this phenomenon: a sequential program in Figure
\ref{code:sec} and a multi-threaded program in Figure \ref{code:par}.

\begin{figure}[htb!]
    \begin{minted}[linenos]{c}
int foo( int a, int b ) {
    int result;
    // Store a complicated expression,
    // e.g. modular arithmetics, to result
    return result;
}

int main() {
    int x = foo( nondet(), nondet() );
    int y = 0;
    for ( uint n = nondet(); n % 42 != 0; n++ ) {
        y++;
    }
    return x * y;
}
    \end{minted}
    \caption{Sequential code demonstrating the motivation for the state slicing. }
    \label{code:sec}
\end{figure}

Let us examine state spaces of these two examples. To keep the
explanation simple, we provide a rather high-level description and omit
technical details of the real implementation of \symdivine that
operates on top of the \llvm infrastructure. The example in Figure
\ref{code:sec} consists of two functions -- \texttt{main} and
\texttt{foo}. The function \texttt{foo} represents a function
taking two integer arguments and computing an integer result. Effect
of this function can be described by a formula $\psi(a,b,x)$ as a
relation between input arguments and the return value.

When this example is examined by \symdivine, among others, the following states are produced:
\begin{itemize}
\item an initial state (before \texttt{main} starts) -- $s_{\mathit{init}}$;
\item a state $s_i$ after every cycle iteration;
\item a final state $s_{\mathit{final}_i}$ corresponding to the
  location just after line 14 for each number $0 \leq i \leq 42$ of
  the performed iterations of the for loop.
\end{itemize}
Note that in this example the cycle has to be unrolled for each
iteration. This is due to the presence of a variable \texttt{y}, which
has different value in every iteration and therefore each iteration
can be distinguished from every other.

The symbolic parts of these states are:
\begin{align*}
  s_{\mathit{init}} = {}&\true, \\
  s_i = {}&\psi(a,b,x^1) \wedge y^1=0 \wedge {} \\
          &n^1  \bmod 42 \neq 0~\wedge~y^2 = y^1 + 1\wedge~n^2=n^1+1~\wedge{} \\
          &\cdots \\
          &n^i  \bmod 42 \neq 0~\wedge~y^{i+1}=y^i+1~\wedge~n^{i+1} = n^i + 1 \\
  s_{\mathit{final}_i} = {}&\psi(a,b,x_1) \wedge y^1=0 \wedge {} \\
          &n^1  \bmod 42 \neq 0~\wedge~y^2=y^1+1~\wedge~n^2 = n^1 + 1~\wedge {} \\
          &\cdots \\
          &n^i  \bmod 42 \neq 0~\wedge~y^{i+1}=y^i+1~\wedge~n^{i+1} = n^i + 1 \wedge {} \\
          &n^{i+1} \bmod 42 = 0~\wedge~\mathit{returnValue}=x^1 \times y^{i+1}.
\end{align*}

During the generation of the state space, \symdivine tries to merge
newly generated states with already existing states and thus performs
the equality check. As merging occurs only on the states with the same
control-flow locations, equality checks are issued whenever a state
$s_i$ or $s_{\mathit{final}_i}$ is produced. If we examine the
formulas that are checked for satisfiability during equality checks
for $s_i$, we can observe that although the values of variables $x, a$
and $b$ do not change during the loop, all queries test their
equality. This forces the \SMT solver to consider the expensive
formula $\psi(a,b,x_0)$ for each unrolling of the cycle and thus slows
the verification of the program down. This effect is even stronger if
there is a large number of states per a control-flow location. The
observed pattern of computing multiple independent sub-results in
advance and combining them later in the computation of the program is
quite common in sequential programs.

\begin{figure}[htb!]
    \begin{minted}[linenos]{c}
volatile int x;
volatile int y;

void foo( arg ) {
    arg* = nondet();
    while( *arg % 5 ) {
        ( *arg )++;
    }
}

int main() {
    t1 = new_thread( foo, &x );
    t2 = new_thread( foo, &y );
    join( t1 );
    join( t2 );
    return x + y;
}

    \end{minted}
    \caption{Parallel code demonstrating the motivation for the state slicing. }
    \label{code:par}
\end{figure}

Similarly, if we examine the example of a parallel code in Figure
\ref{code:par}, we find out that the \SMT solver again does more work
than necessary. If the program under inspection contains threads, they
do not necessarily interact in each step of the computation. Some
thread interleavings are not interesting from verification point of
view and therefore \symdivine implements state-space reduction
techniques. However, these techniques do not eliminate all states
produced by equivalent interleaving of threads. By using a similar
approach as in the sequential case, i.e. dividing the formula into
unrelated parts, we can decrease the verification time and alleviate
$\tau$-reduction's imperfections. In this case, the independent parts
of the formula consist of variables local to each of the
threads.


These observations motivate the decomposition of multi-states into
independent parts, in which each group of independent variables is
represented by one first-order formula. We show that such
representations can decrease size and in most of the cases also the
number of \SMT queries necessary during the program verification. By
decomposing multi-states into multiple independent parts, emptiness
and equality checks can be performed independently on each of these
parts. The benefit of thus modified emptiness checks is
twofold. First, although the number of performed \SMT queries grows,
they are much simpler, as the \SMT solver does not have to reason
about independent variables, which have no effect on the
satisfiability. In many cases, such queries can be decided by purely
syntactic decision procedures, without even using the \SMT
solver. Second, when an operation is performed on a subset of program
variables, the unrelated part of the multi-state does not change, and
the resulting queries can thus be efficiently cached.

\subsection{Sliced Multi-states}

A \emph{sliced multi-state} is a triple
$(c, m, \{ \varphi_i \}_{1 \leq i \leq k})$, where $c$ and $m$ are as
before and $\varphi_i$ are mutually independent formulas. Formulas
$\varphi_i$ are called \emph{independent symbolic parts} of the
state. Intuitively, each $\varphi_i$ describes possible memory
valuations of a set of independent program variables in a given
control location. Semantics of sliced multi-states is straightforward
-- a set of concrete program states represented by a sliced
multi-state $(c, m, \{ \varphi_i \}_{1 \leq i \leq k})$ is defined as
a set of program states represented by the (ordinary) multi-state
$(c, m, \bigwedge_{1 \leq i \leq k} \varphi_i)$. We say that an ordinary
multi-state $s_1 = (c_1, m_1, \varphi)$ is syntactically equivalent to
the sliced multi-state
$s_2 = (c_2, m_2, \{ \varphi_i \}_{1 \leq i \leq k})$ if $c_1 = c_2$,
$m_1 = m_2$, and $\varphi$ is equal to
$\bigwedge_{1 \leq i \leq k} \varphi_i$ up to the ordering of the
conjuncts.

Although each multi-state $s$ can be converted to a syntactically
equivalent sliced multi-state in many ways, there always exists a
syntactically equivalent sliced multi-state with the largest number of
independent symbolic parts. Recall that each multi-state created during the
interpretation of the program is of form
$(c, m, \bigwedge_{1 \leq i \leq k} \varphi_i)$, with possible
dependencies among formulas $\varphi_i$. Let
$\Phi = \{ \Psi_1, \ldots, \Psi_l \}$ be the set of equivalence
classes of $\{ \varphi_i \}_{1 \leq i \leq k}$ modulo the dependence
relation. A sliced multi-state syntactically equivalent to $s$ with
the largest number of independent symbolic parts is then
$(c, m, \{ \bigwedge \Psi_i \mid 1 \leq i \leq l \})$.

\begin{example}
  \label{slicedMaxGroups}
  Let
  \[
    (c, m, x = y + z~\wedge~b > c~\wedge~z = a~\wedge~d > 0)
  \] be a
  multi-state. The syntactically equivalent sliced multi-state with
  the largest number of independent symbolic parts is
  \[
    (c, m, \{ (x = y + z \wedge z = a),~(b > c),~(d > 0) \}).
  \]
\end{example}

As before, during the program interpretation a sliced multi-state has
to be transformed by conjoining a formula to represent variable
assignments or program branching. However, the situation is now more
complex, as parts of a sliced multi-state may have to be merged during
the computation, because a conjoined formula can introduce new
dependencies among the variables and parts of the multi-state may have
to be merged together in order for the new sliced multi-state to be
correct. In particular, suppose that
$s = (c, m, \{ \varphi_i \}_{1 \leq i \leq k})$ is a sliced
multi-state and $\psi$ is the formula to be conjoined to it. We partition
the set $\{ \varphi_i \mid 1 \leq i \leq k \}$ to two sets $\Phi$ and
$\Psi$ such that all formulas in $\Phi$ are independent on $\psi$ and
all formulas in $\Psi$ are dependent on $\psi$. Then a result of
conjoining $\psi$ to the state $s$ is the sliced multi-state
$(c, m, (\psi \wedge \bigwedge \Psi) \cup \Phi)$.

\begin{example}
  Consider the sliced multi-state from
  Example~\ref{slicedMaxGroups}. After pruning caused by
  interpretation of the branching $x = b$, the new sliced multi-state is
  \[
    (c, m, \{ (x = b~\wedge~x = y + z~\wedge~z = a~\wedge~b > c),~(d > 0) \}),
  \]
  because symbolic parts $x = y + z \wedge z = a$ and $b > c$ have to
  be merged.
\end{example}

Moreover, to be able to compare two sliced multi-states, their
independent parts have to be in one-to-one correspondence that
respects program variables. To define this formally, we
introduce a function \progVars that for a given formula $\varphi$
returns the set of program variables represented by the formula
$\varphi$, i.e.
$\progVars(\varphi) = \{ (s,p) \mid (s, p)^i \in \free{\varphi} \text{ for
  some } i \in \mathbb{N} \}$.

Two sliced multi-states
$s_1 = (c_1, m_1, \{ \varphi_i \}_{1 \leq i \leq k})$ and
$s_2 = (c_2, m_2, \{ \psi_i \}_{1 \leq i \leq k})$ with the same
number of independent symbolic parts are then said to be
\emph{matching} if

\begin{itemize}
\item they have the same control part,
\item sets
$\progVars(\varphi_i)$ and $\progVars(\varphi_j)$ are disjoint for all $i \not = j$,
\item sets $\progVars(\psi_i)$ and
$\progVars(\psi_j)$ are disjoint for all $i \not = j$, and
\item there is a bijection
$f \colon \{ \varphi_i \mid {1 \leq i \leq k} \} \rightarrow \{ \psi_i
\mid {1 \leq i \leq k} \}$ such that
\[
  \progVars(\varphi_i) = \progVars(f(\varphi_i)).
\]
\end{itemize}

It is easy to see that each two sliced multi-states
$s_1 = (c, m, \Phi)$ and $s_2 = (c, m, \Psi)$ with the same control
part can be transformed to equivalent sliced multi-states that are
matching: an equivalent pair of matching sliced multi-states can
always be obtained by setting $s_1' = (c, m, \{ \bigwedge \Phi \})$
and $s_2' = (c, m, \{ \bigwedge \Psi \})$. However, it is often
possible to obtain equivalent matching sliced multi-states with
more independent groups than one, as the following example
shows.

\begin{example}
  Let $s_1 = (c, m, \Phi)$ and $s_2 = (c, m, \Psi)$, where $c$ and m
  contain only one thread and one segment with variables
  $x,y,z,u,v$. Because a segment number is not important for this
  example, let us denote $i$-th generation of the variable $x$ as
  $x^i$. Let
  \[
    \Phi = \{ (x^1 = y^2 \wedge y^2 = y^1 + 1),~~(z^1 \geq 0),~~(u^1 \leq v^1),~~(u^2 = 5)\}
  \]
  and
  \[
    \Psi = \{ (y^2 = y^1 + u^1),~~(z^1 \geq 3),~~(z^2 = 5),~~(x^1 \leq v^1)\}.
  \]
  Multi-states $s_1$ and $s_2$ are obviously not matching, although
  they have the same number of independent groups. Matching
  multi-states $s_1'$ and $s_2'$ that are equivalent with $s_1$ and
  $s_2$ are for example $s_1' = (c, m, \Phi')$ and
  $s_2' = (c, m, \Psi')$, where
    \[
    \Phi = \{ (x^1 = y^2 \wedge y^2 = y^1 + 1 \wedge u^1 \leq v^1 \wedge u^2 = 5),~~(z^1 \geq 0)\}
  \]
  and
  \[
    \Psi = \{ (y^2 = y^1 + u^1 \wedge x^1 \leq v^1),~~(z^1 \geq 3 \wedge z^2 = 5)\}.
  \]
\end{example}

\subsection{Equality Check}

We can raise the equality check to sliced multi-states. Let $s_1$ and
$s_2$ be matching sliced multi-states with the same control part. That
is, $s_1 = (c, m, \{ \varphi_i \}_{1 \leq i \leq k})$ and
$s_2 = (c, m, \{ \psi_i \}_{1 \leq i \leq k})$ for a control part $c$
and formulas $\varphi_i$ and $\psi_i$ for each $1 \leq i \leq
k$. Further, let $f$ be the function from the definition of matching
sliced multi-states. An obvious way to check equality of these sliced
multi-states is to use the equality procedure for ordinary multi-states
with multi-states $(c, m, \bigwedge_{1 \leq i \leq k}\varphi_i)$ and
$(c, m, \bigwedge_{1 \leq i \leq k}\psi_i)$. However, this way of
performing the equality check completely ignores the structure of
split multi-states. Instead, we present an equality check procedure that
leverages the structure of sliced multi-states and allows simpler
queries that can be easier to solve and also cached efficiently.

As $s_1$ is a sliced multi-state, sets $\free{\varphi_i}$ and
$\free{\varphi_j}$ are disjoint for each $i \not = j$. For each
program variable $p$ defined in the control location $c$, there is a
variable $x^p$ in some formula $\varphi_i$. The variable $x^p$
represents the last generation of the program variable $p$ in the
multi-state $s_1$. Moreover, as the sliced multi-states are matching,
the variable $y^p$, which corresponds to the same program variable, is
a free variable of $f(\varphi_i)$. We without loss of generality
suppose that $\psi_i = f(\varphi_i)$. Note that from the definition of
the function $f$, the formula $\psi_i = f(\varphi_i)$ is the same for all free
variables $x^p$ in $\varphi_i$. Therefore, we can denote as
$\pVarsi{c}{i}$ the set of program variables, whose last generations
are in $\varphi_i$, which is equivalent to the set of program
variables whose last generation is in $\psi_i$.

We can now define for each $i$ the formula
$\notsubseteq(s_1, s_2, i)$, which is satisfiable precisely if there
is a valuation of variables from $\pVarsi{c, m}{i}$ that is present in
the multi-state $s_1$ but not present in the multi-state $s_2$. If
$y_1, \ldots, y_m$ are all free variables that occur in any of the
formulas $\psi_i$, the formula $\notsubseteq$ is defined as follows:
\[
  \notsubseteq(s_1, s_2, i) \defeq \varphi_i~\land~ \forall y_1 \ldots
  y_m \, \Big( \psi_i \Rightarrow \bigvee_{p \in \pVarsi{c}{i}} (x^p
  \not = y^p) \Big)
\]
Using this formula, the equality of the two split multi-states
can be decided by calling an \SMT solver on each of the formulas
$\notsubseteq(s_1, s_2, i)$ independently. We show that the original
formula $\notsubseteq(s_1, s_2)$ is equisatisfiable with the formula
$\bigvee_{1 \leq i \leq k} \notsubseteq(s_1, s_2, i)$ and therefore
the original \SMT query can be replaced by multiple independent \SMT
queries. Although this is true only in the case in which both
multi-states $s_1$ and $s_2$ are non-empty, this is not a problem, as
the equality is checked only for non-empty multi-states.

First, we prove two lemmas that are used in the theorem that states the
correctness of the approach.

\begin{lemma}
  \label{lem:independentSwap}
  Let $\varphi$ be a satisfiable formula, $\rho$ a formula independent
  of $\varphi$ and $\psi$ an arbitrary formula. Then formulas
  $\varphi \wedge (\psi \vee \rho)$ and
  $(\varphi \wedge \psi) \vee \rho$ are equisatisfiable.
\end{lemma}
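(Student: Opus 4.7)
The plan is to prove the two directions of equisatisfiability separately, using the standard semantic definition of satisfiability and the key fact that independence of $\varphi$ and $\rho$ means $\free{\varphi} \cap \free{\rho} = \emptyset$, so any assignment can be modified on variables of one without affecting the truth value of the other.

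For the easy direction, I would assume $\mu \models \varphi \wedge (\psi \vee \rho)$. Then $\mu \models \varphi$, and either $\mu \models \psi$ or $\mu \models \rho$. In the first case, $\mu \models \varphi \wedge \psi$, hence $\mu \models (\varphi \wedge \psi) \vee \rho$. In the second case, $\mu \models \rho$ already suffices for the disjunction. So this direction needs no hypothesis on satisfiability of $\varphi$ at all; it is pure propositional distribution.

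The harder direction is the converse, and this is where the satisfiability hypothesis on $\varphi$ is used. Suppose $\mu \models (\varphi \wedge \psi) \vee \rho$. If $\mu \models \varphi \wedge \psi$, we are done as above. Otherwise $\mu \models \rho$, but we have no guarantee that $\mu \models \varphi$. Here I would invoke satisfiability of $\varphi$ to pick some $\nu \models \varphi$, and then build a composite assignment $\mu'$ that agrees with $\nu$ on $\free{\varphi}$ and with $\mu$ on all other variables (in particular on $\free{\rho}$). Because $\free{\varphi} \cap \free{\rho} = \emptyset$, the truth value of $\rho$ under $\mu'$ coincides with its value under $\mu$ (a routine induction on the structure of $\rho$, since neither term evaluation nor quantifier bindings over variables disjoint from $\free{\rho}$ change truth), so $\mu' \models \rho$ and hence $\mu' \models \psi \vee \rho$. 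Symmetrically, $\mu' \models \varphi$ because $\mu'$ agrees with $\nu$ on $\free{\varphi}$. Therefore $\mu' \models \varphi \wedge (\psi \vee \rho)$.

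The only subtle point — and the one I would take care to justify explicitly — is the invariance of truth value under the variable-merging step: in a many-sorted logic with quantifiers, this follows from the standard coincidence lemma stating that $\eval{\chi}_\mu$ depends only on $\mu \upharpoonright \free{\chi}$. Applying this lemma to $\varphi$ with $\mu'$ and $\nu$, and to $\rho$ with $\mu'$ and $\mu$, yields both needed facts. Note that $\psi$ plays no role in this direction whatsoever, nor does the satisfiability of $\varphi \wedge \psi$ need to be established — only the weaker disjunctive conclusion is required.
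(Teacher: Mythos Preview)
Your proof is correct and follows essentially the same approach as the paper: the left-to-right direction is immediate by distributivity, and for the converse you combine a model of $\rho$ with a model of $\varphi$ using disjointness of free variables, exactly as the paper does (it writes this merge as $\mu' \cup \nu$ with $\mu'$ the restriction of $\mu$ to $\free{\rho}$). Your explicit invocation of the coincidence lemma is a welcome bit of extra care, but the argument is the same.
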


\begin{proof}
  The implication from left to right follows easily from distributivity
  of conjunction and disjunction. For the converse, suppose
  $(\varphi \wedge \psi) \vee \rho$ is satisfiable and $\mu$ is its
  model.
  If $\mu$ is a model of $(\varphi \wedge \psi)$, it is obviously also
  a model of $\varphi \wedge (\psi \vee \rho)$. Suppose on the other
  hand that $\mu$ is a model of $\rho$ and let $\mu'$ be the
  restriction of $\mu$ to the free variables of $\rho$. As $\varphi$
  is satisfiable, it has a model $\nu$. Then by the independence of
  $\varphi$ and $\rho$, the map $\mu' \cup \nu$ is well-defined and is
  a model of both $\varphi$ and $\rho$ and therefore also of
  $\varphi \wedge (\psi \vee \rho)$.
\end{proof}

\begin{lemma}
  \label{lem:equisatisfiability}
  For non-empty states $s_1$ and $s_2$, formulas
  $\notsubseteq(s_1, s_2)$ and
  $\bigvee_{1 \leq i \leq k} \notsubseteq(s_1, s_2, i)$ are
  equisatisfiable.
\end{lemma}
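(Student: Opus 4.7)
My plan is to transform $\notsubseteq(s_1, s_2)$ into $\bigvee_{i} \notsubseteq(s_1, s_2, i)$ through a chain of logical equivalences, concluded by one equisatisfiability step that invokes Lemma~\ref{lem:independentSwap} together with the non-emptiness of $s_1$.

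First I would abbreviate $D_i \defeq \bigvee_{p \in \pVarsi{c}{i}} (x^p \not = y^p)$. Since $s_1$ and $s_2$ are matching, the sets $\pVarsi{c}{i}$ partition $\pVars{c}$ and hence $\bigvee_{p \in \pVars{c}} (x^p \not = y^p) \equiv \bigvee_i D_i$. Substituting $\varphi = \bigwedge_i \varphi_i$ and $\psi = \bigwedge_i \psi_i$ into the definition of $\notsubseteq(s_1, s_2)$ and using the propositional equivalence $\bigl(\bigwedge_i \alpha_i \Rightarrow \bigvee_i \beta_i\bigr) \equiv \bigvee_i(\alpha_i \Rightarrow \beta_i)$ simplifies the body of the quantifier to $\bigvee_i (\psi_i \Rightarrow D_i)$.

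Next I would push the universal quantifier inside. Let $\bar y_i \subseteq \{y_1, \ldots, y_m\}$ be the free $y$-variables of $\psi_i$; because the $\psi_i$ are mutually independent and the multi-states are matching, these groups partition $\{y_1, \ldots, y_m\}$, and the only $y$-variables free in $\psi_i \Rightarrow D_i$ lie in $\bar y_i$. The standard equivalence
\[
  \forall \bar y_1 \ldots \bar y_k \bigvee_i \alpha_i \;\equiv\; \bigvee_i \forall \bar y_i\, \alpha_i,
\]
valid whenever the $\bar y_i$ are pairwise disjoint and $\alpha_i$ depends only on $\bar y_i$, then yields $\notsubseteq(s_1, s_2) \equiv \bigwedge_j \varphi_j \land \bigvee_i A_i$, where $A_i \defeq \forall y_1 \ldots y_m (\psi_i \Rightarrow D_i)$. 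This quantifier-distribution step is the main obstacle I expect: the forward direction needs a careful semantic argument that, assuming each $\forall \bar y_i \alpha_i$ fails, picks counter-witnesses $\bar y_i^*$ and glues them into a single assignment using disjointness of the $\bar y_i$ to contradict the left-hand side.

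The final step replaces equivalence by equisatisfiability to drop spurious conjuncts. By distributivity, $\bigwedge_j \varphi_j \land \bigvee_i A_i$ is equivalent to $\bigvee_i \bigl( \bigwedge_j \varphi_j \land A_i \bigr)$. For each fixed $i$, note that $\free{A_i} \subseteq \free{\varphi_i}$, so $\bigwedge_{j \neq i} \varphi_j$ is independent of $\varphi_i \land A_i$; it is moreover satisfiable, because $s_1$ is non-empty and the $\varphi_j$ have pairwise disjoint variables. Applying Lemma~\ref{lem:independentSwap} with the satisfiable $\bigwedge_{j \neq i} \varphi_j$ in the role of $\varphi$, with $\rho := \varphi_i \land A_i$, and with $\psi := \false$, I conclude that $\bigwedge_j \varphi_j \land A_i$ is equisatisfiable with $\varphi_i \land A_i = \notsubseteq(s_1, s_2, i)$. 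Since equisatisfiability is preserved under finite disjunction, the claim follows.
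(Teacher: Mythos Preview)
Your proposal is correct and follows essentially the same route as the paper's proof: expand the definitions, regroup the body of the quantifier into the disjunction $\bigvee_i(\neg\psi_i \lor D_i)$, push the universal quantifier through that disjunction using pairwise variable-disjointness, and finally invoke Lemma~\ref{lem:independentSwap} together with the satisfiability of the $\varphi_j$ to strip the irrelevant conjuncts. Your final step (distribute first, then apply Lemma~\ref{lem:independentSwap} with $\psi=\bot$ to each disjunct and use closure of equisatisfiability under finite disjunction) is in fact a cleaner spelling-out of what the paper calls ``repeated application of Lemma~\ref{lem:independentSwap}.''
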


\begin{proof}
  First, we show that the formula $\notsubseteq(s_1, s_2)$ is
  logically equivalent to the formula
  \[
    \bigwedge_{1 \leq i \leq k} \varphi_i~\land~
    \bigvee_{1 \leq i \leq k} \Big ( \forall y_1 \ldots y_m \, \big ( \neg \psi_i \lor \bigvee_{p \in \pVarsi{c}{i}} (x^p \not = y^p) \big ) \Big )  \\
  \]
  and subsequently we show that this formula is equisatisfiable with
  $\bigvee_{1 \leq i \leq k} \notsubseteq(s_1, s_2, i)$.

  To show the logical equivalence above, we present a sequence of
  formulas that are logically equivalent due to the definitions used and
  well-known first-order tautologies.

  \begin{align}
    \varphi~&\land~
      \forall y_1 \ldots y_m \, \Big ( \psi \Rightarrow \bigvee_{p \in \pVars{c}} (x^p \not = y^p) )  \\
    \bigwedge_{1 \leq i \leq k} \varphi_i~&\land~
      \forall y_1 \ldots y_m \, \Big ( \bigwedge_{1 \leq i \leq k} \psi_i \Rightarrow \bigvee_{\substack{1 \leq i \leq k \\ p \in \pVarsi{c}{i}}} (x^p \not = y^p) \Big )  \\
    \bigwedge_{1 \leq i \leq k} \varphi_i~&\land~
      \forall y_1 \ldots y_m \, \Big ( \bigvee_{1 \leq i \leq k} \neg \psi_i \lor \bigvee_{\substack{1 \leq i \leq k \\ p \in \pVarsi{c}{i}}} (x^p \not = y^p) \Big )  \\
    \bigwedge_{1 \leq i \leq k} \varphi_i~&\land~
      \forall y_1 \ldots y_m \, \Big ( \bigvee_{1 \leq i \leq k} \big ( \neg \psi_i \lor \bigvee_{p \in \pVarsi{c}{i}} (x^p \not = y^p) \big ) \Big )  \\
    \bigwedge_{1 \leq i \leq k} \varphi_i~&\land~
      \bigvee_{1 \leq i \leq k}  \Big ( \forall y_1 \ldots y_m \, ( \neg \psi_i \lor \bigvee_{p \in \pVarsi{c}{i}} (x^p \not = y^p) )  \Big )
  \end{align}
  Some of these steps require explanation. The equivalence of (1) and
  (2) follows from definitions used and reordering of disjuncts. The
  equivalence of (3) and (4) follows again by reordering the
  disjuncts, and the equivalence of (4) and (5) follows from the fact
  that for independent formulas $\varphi$ and $\psi$, the formula
  $\forall x \, (\varphi \vee \psi)$ is equivalent to
  $(\forall x \, \varphi) \vee (\forall x \,
  \psi)$.

  Now, as each $\varphi_i$ is satisfiable and independent on
  $\forall y_1 \ldots y_m \, ( \neg \psi_j \lor \bigvee_{p \in
    \pVarsi{c}{j}} (x^p \not = y^p) ))$ for all $j \not = i$, a
  repeated application of Lemma~\ref{lem:independentSwap} shows that the last formula is
  equisatisfiable with the formula
  \[
    \bigvee_{1 \leq i \leq k} \Big ( \varphi_i~\wedge~\forall y_1 \ldots y_m
    \, ( \neg \psi_i \lor \bigvee_{p \in \pVarsi{c}{i}} (x^p \not =
    y^p) ) \Big ).
  \]
  However, this is precisely the formula
  $\bigvee_{1 \leq i \leq k} \notsubseteq(s_1, s_2, i)$, as was
  required.
\end{proof}

The previous lemma now implies correctness of the improved equality
check.

\begin{theorem}[Correctness]
  For non-empty states $s_1$ and $s_2$, the formula
  $\notsubseteq(s_1, s_2)$ is satisfiable if and only if at least one
  of formulas $\notsubseteq(s_1, s_2, i)$ is satisfiable.
\end{theorem}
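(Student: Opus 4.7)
The plan is to derive the theorem as an almost immediate corollary of Lemma~\ref{lem:equisatisfiability}. The bridging observation is that a disjunction $\bigvee_{1 \leq i \leq k} \notsubseteq(s_1, s_2, i)$ is satisfiable precisely when at least one of its disjuncts $\notsubseteq(s_1, s_2, i)$ has a model, which is a standard propositional fact (one direction is trivial, the other uses any model of a particular disjunct as a witness for the whole disjunction).

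With that in hand, I would argue as follows. First, invoke Lemma~\ref{lem:equisatisfiability} on the non-empty sliced multi-states $s_1$ and $s_2$ to obtain that $\notsubseteq(s_1, s_2)$ is equisatisfiable with $\bigvee_{1 \leq i \leq k} \notsubseteq(s_1, s_2, i)$. Second, combine this with the observation above to conclude that $\notsubseteq(s_1, s_2)$ is satisfiable if and only if at least one $\notsubseteq(s_1, s_2, i)$ is satisfiable.

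Since Lemma~\ref{lem:equisatisfiability} has already done the heavy lifting --- both the logical manipulation pushing the big conjunction outside and the repeated application of Lemma~\ref{lem:independentSwap} --- there is essentially no obstacle here; the proof amounts to explicitly citing these two facts in sequence. The only subtlety worth flagging is that the non-emptiness hypothesis on $s_1$ and $s_2$ is exactly what is needed so that each $\varphi_i$ is satisfiable, which is the precondition required to apply Lemma~\ref{lem:independentSwap} inside Lemma~\ref{lem:equisatisfiability}; I would note this briefly to make clear why the theorem is stated for non-empty states.
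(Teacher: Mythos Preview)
Your proposal is correct and matches the paper's own proof essentially verbatim: the paper also derives the theorem directly from Lemma~\ref{lem:equisatisfiability}, noting only that satisfiability of the disjunction is equivalent to satisfiability of some disjunct (phrased there as ``distributivity of existential quantifier over disjunctions''). Your additional remark about the role of the non-emptiness hypothesis is a nice clarification that the paper leaves implicit.
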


\begin{proof}
  Follows from Lemma~\ref{lem:equisatisfiability} by distributivity of existential quantifier
  over disjunctions.
\end{proof}

\begin{example}
  Consider again the program in Figure~\ref{code:sec}. After the state
  decomposition, the independent symbolic parts of split multi-states are
  \begin{align*}
    s_{\mathit{init}} = {}\{ &\true \}, \\
    s_i = {}\{ &\psi(a,b,x_1), \\
                          &(y^1=0 \wedge {} \\
                          &n^1  \bmod 42 \neq 0~\wedge~y^2 = y^1 + 1\wedge~n^2=n^1+1~\wedge{} \\
                          &\cdots \\
                          &n^i  \bmod 42 \neq 0~\wedge~y^{i+1}=y^i+1~\wedge~n^{i+1} = n^i + 1) \} \\
    s_{\mathit{final}_i} = {}\{ &\psi(a,b,x_1) \wedge y^1=0 \wedge {} \\
                          &n^1  \bmod 42 \neq 0~\wedge~y^2=y^1+1~\wedge~n^2 = n^1 + 1~\wedge {} \\
                          &\cdots \\
                          &n^i  \bmod 42 \neq 0~\wedge~y^{i+1}=y^i+1~\wedge~n^{i+1} = n^i + 1 \wedge {} \\
                          &n^{i+1} \bmod 42 = 0~\wedge~\mathit{returnValue}=x^1 \times y^{i+1} \}.
\end{align*}
  The equality check of two states $s_k$, $s_l$ for $k \not = l$ is
  decomposed into two queries: a query concerning the values of
  variables $a,b,x$ and a query concerning variables $y$ and
  $n$. The former query can be decided without a call to an \SMT
  solver, since both states $s_k$ and $s_l$ have the identical
  symbolic part $\psi(a,b,x_1)$ and a simple syntactic check is
  sufficient to determine this. Only the latter query needs to be sent
  to the \SMT solver, which does not have to reason about the
  potentially hard subformula $\psi(a,b,x_1)$ to decide that the
  states $s_k$ and $s_l$ are not equivalent.
\end{example}

\section{Implementation and Evaluation} \label{sec:impleval}

\begin{table*}[t!]
    \centering
    \caption{Summary results showing effects of PartialStore and caching.
    \emph{Solved} marks benchmark count, which the configuration solved in the
    time limit. \emph{Time} is a sum of benchmarks time, which were solved by
    all the configurations.}\label{tab:summary}
    \bigskip
    \setlength{\tabcolsep}{3pt}
    \begin{tabular}{lp{0.3cm}rrp{0.3cm}rrp{0.3cm}rrp{0.3cm}rr}
      \toprule
         &&
        \multicolumn{2}{c}{\pbox{20cm}{\relax\ifvmode\centering\fi SMTStore \\ no cache}} &&
        \multicolumn{2}{c}{\pbox{20cm}{\relax\ifvmode\centering\fi SMTStore \\ with cache}} &&
        \multicolumn{2}{c}{\pbox{20cm}{\relax\ifvmode\centering\fi PartialStore \\ no cache}} &&
        \multicolumn{2}{c}{\pbox{20cm}{\relax\ifvmode\centering\fi PartialStore \\ with cache}} \\
        \addlinespace

      \cmidrule{3-4}
      \cmidrule{6-7}
      \cmidrule{9-10}
      \cmidrule{12-13}
      Category && time[s] & solved && time[s] & solved && time[s] & solved && time[s] & solved \\
      \midrule
        Concurrency && 1828 & 40 && \textbf{1344} & 41 && 2414 & 39 && 1506 & \textbf{42} \\
        DeviceDrivers && 12156 & 241 && 12154 & 241 && 768 & \textbf{298} && \textbf{763} & \textbf{298} \\
        ECA && 20794 & \textbf{230} && \textbf{20388} & 228 && 38907 & 211 && 21606 & 211 \\
        ProductLines && 19571 & 276 && 19770 & 275 && 25361 & 272 && \textbf{11995} & \textbf{293} \\
        Sequentialized && 3710 & 44 && 3821 & 45 && 3200 & 43 && \textbf{1735} & \textbf{47} \\ \midrule
      \textbf{Summary} && 58061 & 831 && 57478 & 830 && 70652 & 863 && \textbf{37607} & \textbf{891} \\
      \bottomrule
    \end{tabular}
  \end{table*}

To evaluate effects of the state slicing, we have extended the
implementation of \symdivine. We have taken an advantage of the
internal modular structure of the tool, where each multi-state
component (control-flow with memory layout, explicit variable
representation and symbolic variable representation) is clearly
separated and has a well-defined interface. Therefore, a new
DataStore (\symdivine's terminology for a multi-state representation)
was implemented and integrated. In the extended version of \symdivine,
user can switch between the original SMTStore and our newly
implemented PartialStore, which uses the state slicing.

We also extended \symdivine with a possibility to cache both emptiness
and equality queries. The caching is performed before an internal
\symdivine formula representation is converted to the format specific
for the given \SMT solver. Therefore, caching is independent of the
DataStore used during the verification.

For the evaluation, we used benchmarks from the SV-COMP~\cite{SVCOMP}
in all categories except Termination. Note that not all categories are
included in results, as there are benchmarks that cannot be verified
by \symdivine due to the presence of dynamically allocated memory or
undefined intrinsic functions. We ran \symdivine in four different
configurations:
\begin{enumerate}
    \item SMTStore without caching,
    \item SMTStore with caching,
    \item PartialStore without caching,
    \item PartialStore with caching.
\end{enumerate}
Our test environment featured and Intel Xeon-5130 \textsc{cpu} (2.0~GHz) with
16~GB of physical \textsc{ram} and an Arch Linux distribution with
4.4.8-1-lts Linux kernel. \symdivine was built in the release
configuration using \textsc{gcc} 5.4 and used the \SMT solver Z3~\cite{Z3} in
version 4.4.1. Each verification task was restricted by a \textsc{cpu}
time limit of 10~minutes and by 7.5 GB memory limit. All SV-COMP
benchmarks were compiled with Clang 3.4 and \texttt{-O2} optimization level to
obtain the \llvm bitcode.

\begin{table*}[htb!]
    \centering
    \caption{Summary results showing effects of PartialStore and caching on
    number of \SMT solver calls.
    \emph{Equal checks} is the number of all issued equality checks. \emph{Syntactic equality} is a number of checks in which syntactically equivalent formulas were supplied. \emph{Cached} marks number of cache hits. \emph{Solver calls} is the number of checks, which required a query to an SMT solver.}\label{tab:checks}
    \bigskip
        \begin{tabular}{@{}lrrrrrrrrrr@{}}
        \toprule
        \multicolumn{1}{c}{}    & \multicolumn{1}{c}{Equal checks} &~& \multicolumn{2}{c}{Syntactic equality} &~& \multicolumn{2}{c}{Cached} &~& \multicolumn{2}{c}{Solver calls} \\ \midrule
          SMTStore                & 3\,110\,831                          &~& 2\,356\,034 & (76\%)                                &~& 0 & (0\%)                         &~& 754\,797 & (24\%)              \\
          PartialStore            & 81\,562\,863                         &~& 77\,132\,869 & (95\%)                               &~& 0 & (0\%)                         &~& 4\,429\,994 & (5\%)           \\
          SMTStore with cache     & 3\,620\,635                          &~& 2\,785\,076 & (77\%)                                &~& 136\,982 & (4\%)                  &~& 698\,577 & (19\%)             \\
          PartialStore with cache & 129\,798\,056                        &~& 122\,666\,491 & (95\%)                              &~& 6\,757\,205 &(5\%)               &~& 374\,360 & ($<$1\%)             \\ \bottomrule
        \end{tabular}
\end{table*}

The results are summarized in Table \ref{tab:summary} and in Figure
\ref{fig:all}. To show only relevant results, we have excluded
benchmarks with less than 10 states, because on such a small number of
states the performance is the same for each DataStore. These small
benchmarks are produced by the Clang optimizations; some benchmarks
are easy enough to be solved by static analysis or they contain
undefined behavior that Clang abuses to its benefit. This
significantly reduces the benchmark set, but it better corresponds to
the practice where compiler's optimizations are used frequently.

We examined the results of each category independently to see the
effects of slicing and caching on different types of input
programs. To be able to explain observed time differences, we also
measured the number of \SMT solver queries and number of cases in which the
query was avoided completely.

In general, multi-state equality is the most expensive operation
during the verification process and therefore we did not observe any
significant speedup caused only by slicing of emptiness queries. On
the other hand, slicing noticeably changes behavior of the equality
check, as can be seen in Table \ref{tab:checks}, which shows counts of
issued equality checks and whether they were solved by syntactic
equality, cache, or an \SMT solver. Slicing of multi-states increases
number of equality checks on our testing set at least by an order of
magnitude. This shows that issued queries usually contain a large
number of independent groups. If the caching is disabled, state
slicing also usually produces larger number of queries to an \SMT
solver. However, caching shows as highly effective on sliced
multi-states and therefore, by enabling it, we can save roughly half
of the queries issued to an \SMT solver compared to the original
version of \symdivine. Note that data for Table \ref{tab:checks} were
collected in the same manner as data for Table~\ref{tab:summary}, i.e. it
contains data from all benchmarks that were solved within the given
timeout. In particular, the number of equality checks is bigger for
more efficient configurations as they were able to decide more
benchmarks within the given time.

Overall, slicing comes with an overhead caused by two factors:
\begin{itemize}
    \item data dependencies have to be computed during the
      interpretation of the program and
    \item the number of potential solver queries is greater and each
      query has a constant overhead.
    \end{itemize}
However, the overhead can be outweighed by:
\begin{itemize}
\item simplicity of issued queries, as the solver does not have to
  deal with irrelevant variables;
\item syntactic equality, because the state slicing makes it easier to
  preserve syntactically equivalent formulas, which helps \symdivine
  as it can quickly recognize syntactically equal formulas; and
\item caching, as an equality query of sliced states is composed of many
  small, rarely changing queries to an \SMT solver.
\end{itemize}

The reasoning above lines up with results observed: configuration with
PartialStore without cache and the configuration with SMTStore with
cache rarely bring any speedup. The behavior was different only in categories
Concurrency and ECA.

In Concurrency set, the difference is caused by the simplicity of the
benchmarks and the presence of diamond-shapes in the state
space. Diamonds in the state space of simple benchmarks tend to
produce syntactically equivalent formulas and therefore \symdivine can
tremendously benefit from their detection. Only less than 2~\% of the
equality queries are handed out to an \SMT solver. Therefore, state slicing
increases the overhead and does not pay off. Benchmarks in the ECA
set are synthetically generated benchmarks with complicated data
dependencies and therefore multi-states of these benchmarks are rarely
sliceable.

On the other hand, our method significantly improves the performance
in the DeviceDriversLinux64 category, where it managed to save 93~\%
of the verification time and managed to verify 57 additional
benchmarks compared to the original configuration. The reason for that
is as follows: these benchmarks are closest to the real-world code and
contain a lot of code irrelevant to potential errors unlike benchmarks
in the other categories, which are usually reduced to the bare-bone of
the problem. In this environment, where multi-states are large and
program interpretation changes them only locally, slicing produces
large amount of syntactically equal queries and also allows for
a large number of cache hits.

In sum, slicing combined with caching performs best on large
benchmarks, as can be seen in Figure \ref{fig:all}, and can save
roughly 40~\% of the verification time. If we omit the category ECA, which
is in our point of view not-so-relevant for real usage, the time
savings go up to 60~\%. Note that state slicing does not cause any
significant memory usage increase as the amount of additional
information is rather small compared to the whole multi-state.

  \begin{figure*}[h!]
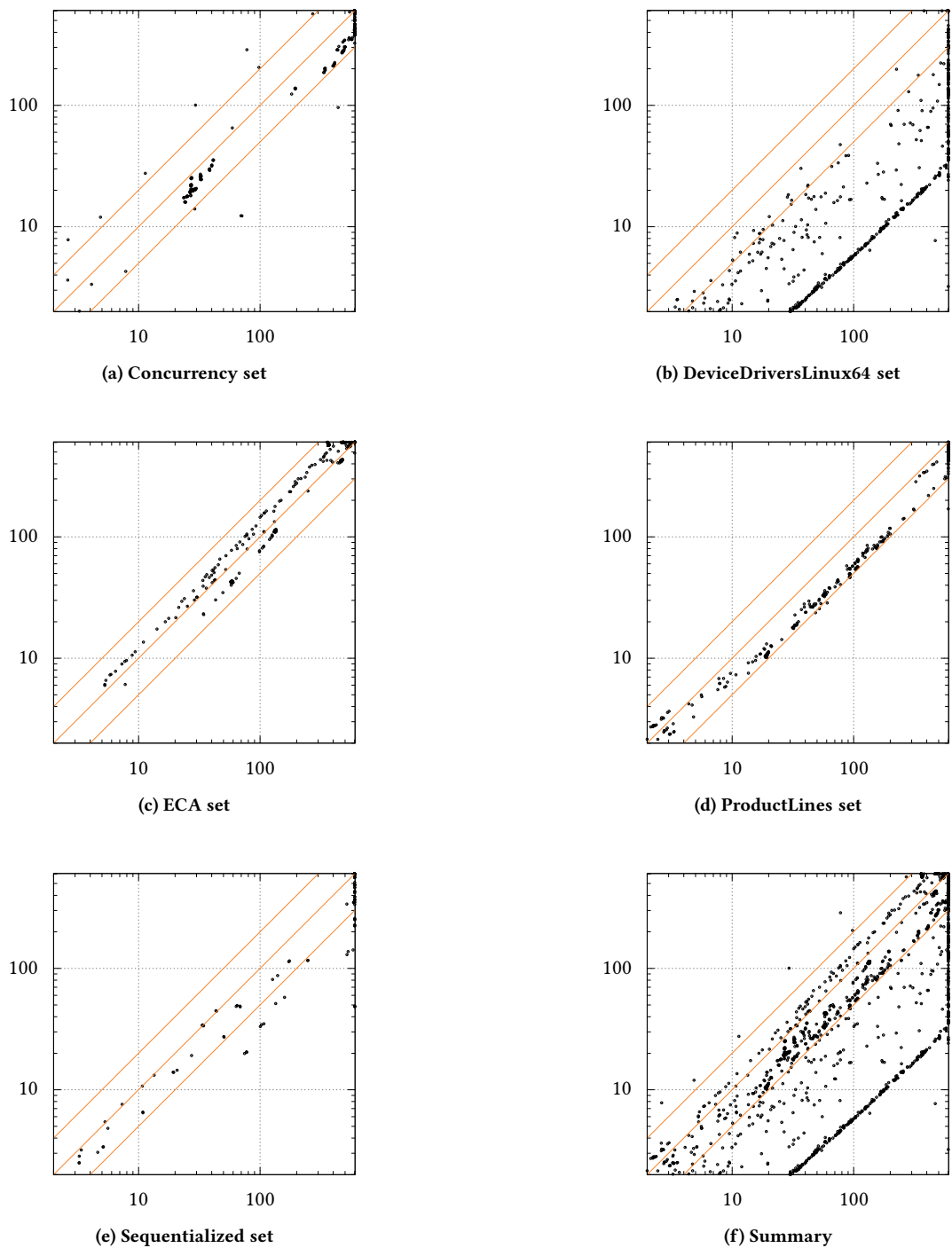

    \captionsetup[subfigure]{aboveskip=-18pt,belowskip=-8pt}
    \begin{subfigure}{.5\textwidth}
        \centering
        \input{graphics/Concurrency/dots.tex}
        \caption{Concurrency set}
        \label{fig:concurrency}
    \end{subfigure}%
    \begin{subfigure}{.5\textwidth}
        \centering
        \input{graphics/DeviceDriversLinux64/dots.tex}
        \caption{DeviceDriversLinux64 set}
        \label{fig:linux}
    \end{subfigure}
    \begin{subfigure}{.5\textwidth}
        \centering
        \input{graphics/ECA/dots.tex}
        \caption{ECA set}
        \label{fig:eca}
    \end{subfigure}%
    \begin{subfigure}{.5\textwidth}
        \centering
        \input{graphics/ProductLines/dots.tex}
        \caption{ProductLines set}
        \label{fig:product}
    \end{subfigure}
    \begin{subfigure}{.5\textwidth}
        \centering
        \input{graphics/Sequentialized/dots.tex}
        \caption{Sequentialized set}
        \label{fig:seq}
    \end{subfigure}%
    \begin{subfigure}{.5\textwidth}
        \centering
        \input{graphics/summary/dots.tex}
        \caption{Summary}
        \label{fig:summary}
    \end{subfigure}
    \bigskip
    \bigskip
    \caption{Comparison of verification time in seconds of basic
      SMTStore without cache ($x$ axis) and PartialStore with cache
      ($y$ axis). Diagonal lines denote double, equal and half the
      verification time. }
    \label{fig:all}
\end{figure*}

\section{Conclusion} \label{sec:conclusion}

We believe that state slicing combined with caching is a substantial
improvement to the Control-Explicit Data-Symbolic approach to
automated formal verification. Our experimental measurements confirm a
significant performance boost especially if the verified program is
similar to the real-world code. We still see some future work that
would be of much more technical and implementation nature. For
example, we can imagine that an incorporation of other formula
simplification methods, which are not present in Z3, could save more
verification time. The same goes for storing counter examples for
multi-state equality, which could be used to differentiate between two
multi-states without issuing a quantified \SMT query.

\bibliography{main}

\end{document}